\newtheorem{theorem}{Theorem}
\newcommand{\prob}{\mathbb{P}}
\begin{document}
\pagestyle{empty}
\title{Multi-user Beam-Alignment\\for Millimeter-Wave Networks}

\author{Rana A. Hassan, Nicolo Michelusi
\thanks{This research has been funded by NSF under grant CNS-1642982.}
\thanks{R. A. Hassan and N. Michelusi are with the School of Electrical and Computer Engineering, Purdue University. email: \{hassan45,michelus\}@purdue.edu.}
\vspace{-12mm}
}


\maketitle

\begin{abstract}
Millimeter-wave communications 
is the most promising technology for next-generation cellular wireless systems,
thanks to the large bandwidth available compared to sub-6 GHz networks.
 Nevertheless, communication at these frequencies requires narrow beams via massive MIMO and beamforming to overcome the strong signal attenuation, and thus
 precise beam-alignment between transmitter and receiver is needed.
  The resulting signaling overhead may become a severe impairment, especially in mobile networks with high users density.
  Therefore, it is imperative to optimize the beam-alignment protocol to minimize the signaling overhead.
In this paper, the design of energy efficient joint beam-alignment protocols for two users
is addressed, with the goal to minimize the power consumption during data transmission, subject to rate constraints for both users, under analog beamforming constraints. It is proved that a bisection search algorithm is optimal. Additionally,
 the optimal scheduling strategy of the two users in the data communication phase is optimized based on the outcome of beam-alignment, according to a time division  multiplexing scheme.
The numerical results show significant decrease in the power consumption for the proposed joint beam-alignment scheme compared to exhaustive search and a single-user beam-alignment scheme taking place separately for each user.
\end{abstract}
\vspace{-5mm}
\section{Introduction}
Mobile data traffic has shown a tremendous growth in the past few decades, and is expected to increase by $53\%$ in each year until 2020 \cite{cisco2015cisco}. 
Traditionally, mobile data traffic is served almost exclusively by wireless systems operating under $6$ GHz, due to the availability of low-cost hardware and favorable propagation characteristics at these frequencies.
However,
conventional sub-$6$ GHz networks cannot support the high data rate
required by applications such as high definition video streaming,
 due to limited bandwidth availability.
 For this reason, millimeter-wave (mm-wave) systems operating between $30$ to $300$ GHz are receiving growing interest in both 5G related research and industry \cite{niu2015survey},\cite{rappaport2014millimeter}.


The large bandwidth available in the mm-wave frequency band can better address the demands of the ever increasing mobile traffic.
However, signal propagation at these frequencies is more challenging than traditional sub-6 GHz systems,
due to factors such as high propagation loss, directivity, sensitivity to blockage \cite{rappaport1996wireless}, which are exacerbated with the increase in the carrier frequency. These  features
 open up many challenges in both physical and MAC layers for the mm-wave frequencies to support high data rate.
To overcome the propagation loss, mm-wave systems
are expected to leverage
narrow beam communication, via large-dimensional antenna arrays with directional beamforming at both base stations (BSs) and mobile users (MUs), as well as signal processing techniques such as precoding and combining~\cite{akdeniz2014millimeter}.

Maintaining beam-alignment between transmitter and receiver is a challenging task in mm-wave networks, especially in dense and mobile networks: 
under high user density and mobility, frequent blockages and loss of alignment may occur, requiring frequent realignment.
 Unfortunately, the beam-alignment protocol may consume time, frequency and energy resources, thus potentially offsetting the benefits of mm-wave directionality. Motivated by this fact, in our previous work \cite{muddassar} we derived the optimal beam-width for communication, number of sweeping beams, and transmission energy so as to maximize the average rate under an average power constraint in a mobile scenario with a single user.
Several schemes have been proposed to achieve beam-alignment in mm-wave networks.
One of the most popular ones is exhaustive search, where the BS and the MU sequentially search through all possible combinations of transmit and receive beam patterns \cite{jeong2015random}. An iterative search algorithm is proposed in \cite{hur2013millimeter}, where the BS first searches in wider sectors by using
wider beams, and then refines the search within the best sector. In \cite{hussain2017throughput}, we derived a throughput-optimal search scheme called bisection search, which refines search within the previous best sector by using a beam with half the width of the previous best sector. It is shown that the bisection scheme outperforms both iterative and exhaustive schemes in terms of maximizing throughput in the communication phase. All these works focus on a single-user scenario and do not investigate how to exploit the beam-alignment protocols jointly across multiple users.

In the literature, multiuser mm-wave systems have been studied under the topic of precoding \cite{alkhateeb2015limited}, \cite{alkhateeb2015achievable}, beamforming \cite{stirling2015multi} and for wideband mm-wave systems, where the channel is characterized by multi-path components, different delays,
Angle-of-Arrivals/Angle-of-Departures (AoAs/AoDs), and Doppler
shifts \cite{song2017robust}. In all of the previous work, the authors proposed new algorithms in order to enhance the system performance. However, the optimality with respect to optimizing the communication performance in multi-user settings is not established. All of these algorithms cost
in terms of time and energy resources, and have a large effect on
the directionality achieved in the data communication phase, and thus
on power consumption and achievable rate. This motivates us to seek how to optimally balance resources among beam-alignment and data communication.

In this paper, we consider the optimization of beam-alignment and data communication in a two-users mm-wave network. The BS transmits a sequence of beam-alignment beacons 
using a sequence of beams with different beam-shape, and refines its estimate on the 
position of the two users based on the feedback received. Afterwards,
it schedules data transmission to the two users via time-division.
Using a Markov decision process (MDP) formulation \cite{bertsekas1995dynamic}, we prove the optimality of a bisection search scheme during beam-alignment, which scans half 
of the uncertainty region associated to each user in each beam-alignment slot.
We demonstrate numerically power savings up to $3\mathrm{dB}$ lower than under exhaustive search.

The rest of the paper is organized as follows. In Section \ref{sysmo}, we present the system model and the problem formulation, followed by the analysis in Section \ref{analysis}.
Numerical results are presented in Section \ref{numres}, followed by concluding remarks in
Section \ref{conclu}. The proofs of the main analytical results are provided in the Appendix.
\vspace{-2mm}
\section{System Model}
\label{sysmo}

  We consider a mm-wave cellular network with a single base station (BS) and $M$ mobile users (MU$_{i}$), where $i{=}1,2,\cdots,M $, depicted in Fig. \ref{system_model}. 
  In this paper, we consider the case $M{=}2$, and leave the more general case $M{\geq}2$ for future work.

  The BS is located at the origin and the mobile user MU$_i$ is located at angular coordinate $\Theta_i$, at distance $d_i$ from the BS, where $\Theta_i \in [0,2\pi]$ and $d_i\leq d_{\max}$, with $d_{\max}>0$ being the coverage area of the BS. We assume that $\Theta_i$ is uniformly distributed in $[\frac{-\sigma}{2},\frac{\sigma}{2}]$, denoted as $\Theta_i \sim \mathcal{U}[\frac{-\sigma}{2},\frac{\sigma}{2}]$ where $\sigma \in (0,2\pi]$ represents the availability of prior information on the angular coordinate of MU$_{i}$. 
  We assume a single signal path between the BS and each MU,
  either line-of-sight (LOS) or a strong non-LOS signal (\emph{e.g.}, when the LOS signal is temporarily obstructed due to mobility).
  
  \begin{figure}
\begin{center}
\includegraphics[width=0.6\columnwidth]{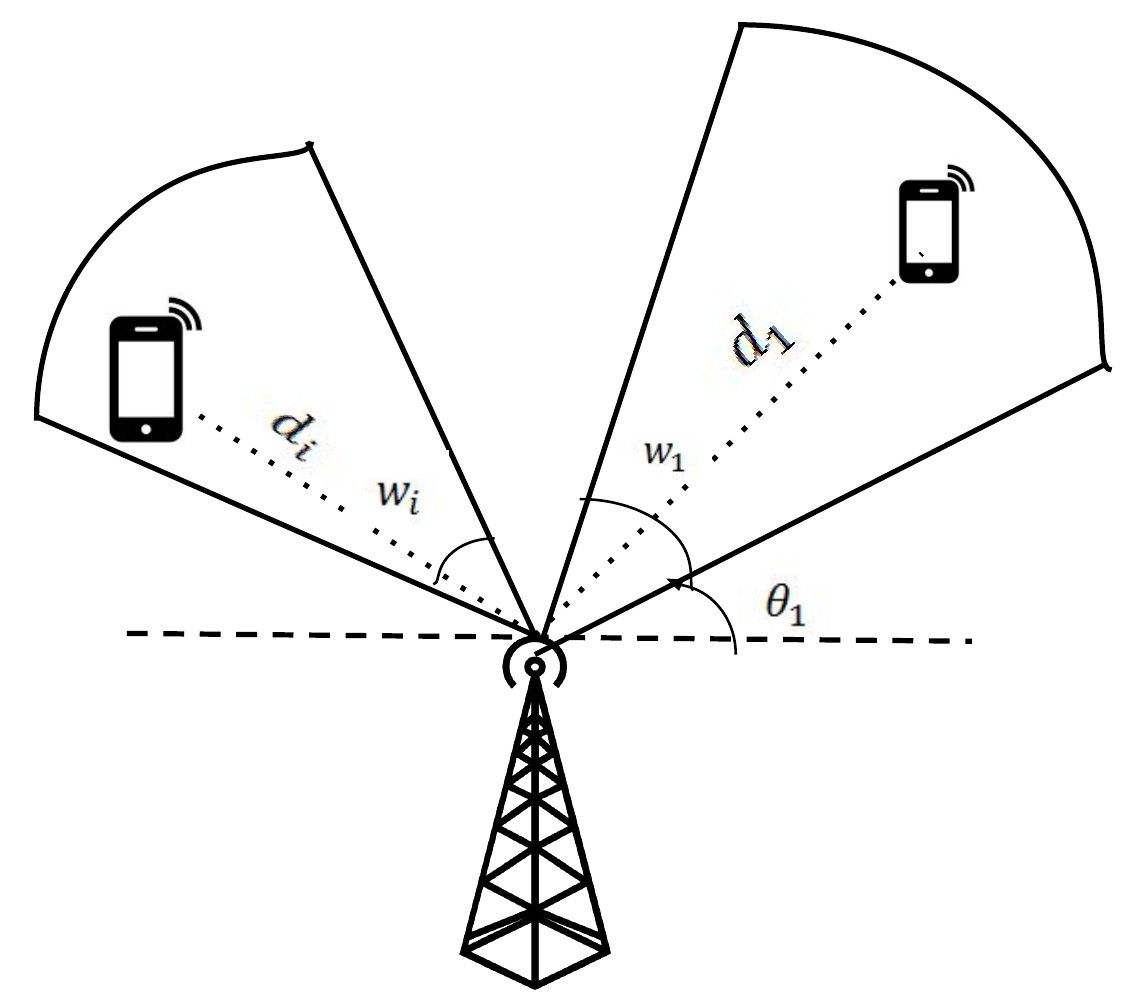}
\caption{Beam pattern for multiuser system model.}
\label{system_model}
\end{center}
\vspace{-5mm}
\end{figure}

The BS uses analog beamforming with a single RF chain. Data transmission is orthogonalized across users.
We model the transmission beam of the BS using a generalization of the \textit{sectored antenna model} \cite{bai2015coverage}: the overall transmission beam is the superposition of multiple beams, each covering a specific sector, which can be implemented via phase shifters \cite{abari2016millimeter}.
In addition, we ignore the effect of secondary beam lobes. 
Thus, we represent the beam shape (part of our design) at time $k$ via the set $\mathcal B_k\subseteq [-\pi,\pi]$, 
which represents the set of angular directions covered by the transmission beam.
Furthermore, we assume that the MUs receive isotropically.
 The proposed analysis can be extended to non-isotropic MUs by using multiple beam-alignment stages, each corresponding to a specific beam pattern at  the MU \cite{exhaustive}.
   
  We assume a frame-slotted network with frame duration $T_{\mathrm{fr}}$ [s]. Each frame is divided into a beam-alignment phase of duration $T_{\mathrm{BA}}$ (Sec. \ref{BA}), followed by a data communication phase of duration $T_{\mathrm{cm}}{=}T_{\mathrm{fr}}{-}T_{\mathrm{BA}}$ (Sec. \ref{comm}), shown in Fig.~\ref{fig.1}.
    \vspace{-4mm}
\subsection{Beam-Alignment Phase}
\label{BA}
In this section, we describe the beam-alignment phase, executed in the initial portion of the frame, of duration $T_{\mathrm{BA}}$.
Beam-alignment is performed over $L$ slots, each of duration $T\triangleq T_{\mathrm{BA}}/L$.
  As shown in Fig \ref{fig.1}, at the beginning of each slot $k=0,1,\dots, L-1$, the BS sends a beacon $b_{k}$ of duration $T_b<T$, using a beam with beam-shape $\mathcal B_k$, and receives a feedback message from both MUs in the remaining portion $T-T_b$ of the slot.

\begin{figure}
\begin{center}
\includegraphics[width=0.9\columnwidth]{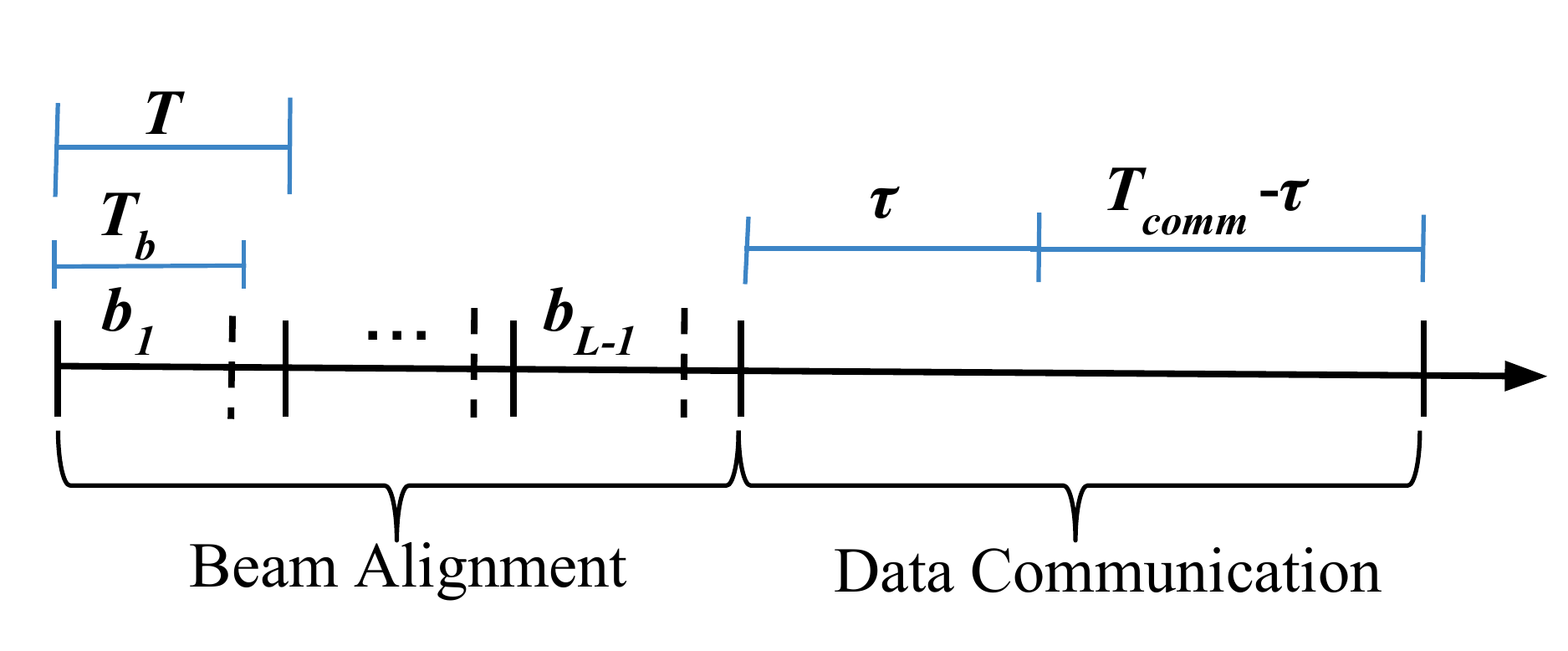}
\caption{Timing diagram of the beam-alignment and data communication phases.}
\label{fig.1}
\end{center}
\vspace{-5mm}
\end{figure}
  
  The beam-shape $\mathcal B_k$ is designed based on the current  
  probability density function (PDF) of MUs' angles $(\Theta_1,\Theta_2)$, denoted as $S_{k}(\theta_1,\theta_2)$, which is updated via Bayes' rule
    based on the feedback received from both MUs, see \eqref{Bayes}.
We also let 
  $S_{k,1}(\theta_1)=\int_{-\pi}^{\pi}S_{k}(\theta_1,\theta_2)\mathrm d\theta_2$
  and
  $S_{k,2}(\theta_2)=\int_{-\pi}^{\pi}S_{k}(\theta_1,\theta_2)\mathrm d\theta_1$
  be the marginal PDF of MU$_1$ and MU$_2$, respectively.
  Note that, at time $0$, $\Theta_i \sim \mathcal{U}[\frac{-\sigma}{2},\frac{\sigma}{2}]$, hence
  \begin{align}
  \label{init}
&  S_{0,1}(\theta)=S_{0,2}(\theta)=\frac{1}{\sigma}\chi\left(\theta\in\left[\frac{-\sigma}{2},\frac{\sigma}{2}\right]\right),
  \\
&  S_{0}(\theta_1,\theta_2)=S_{0,1}(\theta_1)\cdot S_{0,2}(\theta_2),
  \end{align}
  where $\chi(.)$ is the indicator function.
  For convenience, we define the support of $S_{k,i}$ as
  \begin{align}
  \mathcal S_{k,i}\triangleq\mathrm{supp}(S_{k,i}),
  \end{align}
  which defines the \emph{region of uncertainty} for MU$_i$ at time $k$.
  
  If MU${_{i}}$ is located within $\mathcal B_k$, \emph{i.e.}, $\Theta_i\in\mathcal B_k$,
  then it detects the beacon signal successfully and it transmits an acknowledgment (ACK) 
  back to the BS, denoted as $c_{i,k}{=}1$. Otherwise, it sends a negative-ACK (NACK), denoted as $c_{i,k}{=}0$, to inform the BS that no beacon has been detected.
  We assume that 
the feedback message $c_{i,k}\in\{0,1\}$ is received without error by the BS, within the end of the slot. This can be accomplished over a reliable low-frequency control channel, which does not require directional transmission and reception \cite{rangan2014millimeter}.
Additionally, we assume that the beacon is detected with no false-alarm nor mis-detection errors. This assumption requires a dedicated beam design to achieve small error probabilities \cite{hussain2017neyman}.
Thus, we can express the feedback signal as
\begin{align}
\label{feedback}
C_{k,i}=\chi(\Theta_i\in\mathcal B_k).
\end{align}

Given the sequence of feedback signals $C^{k}\triangleq(c_{0,1},c_{0,2}\cdots c_{k,1},c_{k,2})$
received up to slot $k$, and the sequence of beam shapes $\mathcal B^{k}\triangleq(\mathcal B_{0},\cdots,\mathcal B_{k})$ used for beam-alignment, the BS updates the PDF on the MUs' angular coordinate based on Bayes' rule as
\begin{align}
\label{Bayes}
&S_{k+1}(\theta_1,\theta_2)=f(\theta_1, \theta_2\mid \mathcal B^k,C^{k-1},c_{k,1},c_{k,2}) \\&\!\!\overset{(a)}{=}
\frac{\prob(c_{k,1},c_{k,2}\mid\theta_1, \theta_2,\mathcal B^k,C^{k-1})f(\theta_1, \theta_2 \mid \mathcal B^k,C^{k-1})}{\int_{[-\pi,\pi]^2}\prob(c_{k,1},c_{k,2}\mid\tilde{\boldsymbol{\theta}},\mathcal B^k,C^{k-1})f(\tilde{\boldsymbol{\theta}}\mid \mathcal B^k,C^{k-1})\mathrm d\tilde{\boldsymbol{\theta}}} \nonumber\\&\!\!\overset{(b)}{=}
\frac{\prob(c_{k,1}\mid\theta_1,\mathcal B_k)\prob(c_{k,2}\mid\theta_2,\mathcal B_k)
S_{k}(\theta_1,\theta_2)}{\int _{-\pi}^{\pi}\int_{-\pi}^{\pi}
\prob(c_{k,1}\mid\tilde\theta_1,\mathcal B_k)\prob(c_{k,2}\mid\tilde\theta_2,\mathcal B_k)
S_{k}(\tilde\theta_1,\tilde\theta_2)\mathrm d\tilde\theta_{1}\mathrm d\tilde\theta_{2}},
\nonumber
\end{align}
where $f(\cdot|\cdot)$ denotes conditional PDF. In step $(a)$, we applied Bayes' rule; in step $(b)$, we used the previous PDF $f(\theta_1, \theta_2|\mathcal B^k,C^{k-1}_{i})=S_k(\theta_1,\theta_2)$ and the fact that 
$c_{k,i}$ is a function of $\Theta_i$ and $\mathcal B_k$ via \eqref{feedback}.
\vspace{-2mm}
\subsection{Communication Phase}
\label{comm}
  In the communication phase of duration $T_{\mathrm{cm}}$, the BS
 schedules the two MUs using time division multiplexing (TDM).
 Specifically, it transmits to MU$_1$ over a portion $\tau_1\leq T_{\mathrm{cm}}$ of the data communication interval, using the transmission power $P_{L,1}$ and beam with shape $\mathcal B_{L,1}$, and to MU$_2$ over the remaining interval of duration $T_{\mathrm{cm}}-\tau_1$, with power $P_{L,2}$ and using a beam with shape  $\mathcal B_{L,2}$.
 
The powers $P_{L,i}$ and beam-shapes $\mathcal B_{L,i}$ for both MUs, and the time allocation $\tau_1$
are designed based on the PDF of the MUs' angular direction $S_L$ at the beginning of the communication phase,
so as to support the rate $R_i$ over the entire frame.
The beam-shape $\mathcal B_{L,i}$ for MU$_i$ is chosen so
as to provide coverage to guarantee successful transmission, \emph{i.e.}, 
\begin{align}
\label{BS}
\mathcal B_{L,i}=\mathcal S_{L,i}.
\end{align}
Thus, we can express the rate $R_i$ [bps/Hz] for both MUs as
 \begin{align}
& R_1=\frac{\tau_1}{T_{\mathrm{fr}}}\log_{2}\left(1+\gamma_1\frac{P_{L,1}}{\omega_{L,1}}\right), \label{R1}
\\
& R_2=\frac{T_{\mathrm{cm}}-\tau_1}{T_{\mathrm{fr}}} \log_{2}\left(1+\gamma_2\frac{P_{L,2}}{\omega_{L,2}}\right),\label{R2}
 \end{align}
 where $\gamma_i\equiv \frac{\lambda^2 d_{i}^{-\alpha}}{8 \pi N_0W_{\mathrm{tot}}}$
 is the SNR scaling factor, $\alpha$ is the path loss exponent, $N_0$ is the noise power spectral density, $W_{\mathrm{tot}}$ is the total bandwidth and $\omega_{L,i}\triangleq|\mathcal B_{L,i}|$ is the overall beam-width of the transmission beam.
 These equations presume that the transmission power $P_{L,i}$ is spread evenly across the transmit directions defined by the beam shape $\mathcal B_{L,i}$,
 so that the received SNR is $ \gamma_i P_{L,i}/\omega_{L,i}$.
 We then express the energy expenditure as a function of 
 the rate requirements as 
  \begin{align}
& E_{1}\triangleq\tau_1 P_{L,1}=
\omega_{L,1}\epsilon_1\left(\tau_1\right),
\\
& E_{2}\triangleq(T_{\mathrm{cm}}-\tau_1) P_{L,2}=
\omega_{L,2}
\epsilon_2(T_{\mathrm{cm}}-\tau_1),
 \end{align}
 where we have defined
 \begin{align}
 \label{epsilon}
 \epsilon_i(\tau)\triangleq
 \tau\frac{2^{\frac{T_{\mathrm{fr}}}{\tau}R_i}-1}{\gamma_i}
 \end{align}
 as the energy per radian required to transmit with average rate $R_i$ to MU$_i$ over an interval of duration $\tau$.

 \section{Optimization and Analysis}
 \label{analysis}
We define a policy $\pi$ as a function that, given the 
PDF $S_k$,
 selects 
the beam-shape $\mathcal B_k$ in each beam-alignment slot $k=0,1,\dots,L-1$,
the power $P_{L,i}$, beam-shape $\mathcal B_{L,i}$ and time allocation $\tau_1,T_{\mathrm{cm}}-\tau_1$ for both MUs during the 
data communication interval. The goal is to design $\pi$ so as to minimize the average power consumption in the data communication phase, with rate constraints $R_1$ and $R_2$ for both MUs. This optimization problem is expressed as
\begin{align}
\label{P1}
\bar P_{\mathrm{avg}}\triangleq\underset{\pi}{\text{min}}\ 
\mathbb{E}_\mu \Big[
\frac{\omega_{L,1}}{T_{\mathrm{fr}}}\epsilon_1\left(\tau_1\right)
+\frac{\omega_{L,2}}{T_{\mathrm{fr}}}\epsilon_2\left(T_{\mathrm{cm}}-\tau_1\right)\Big],
\end{align}
  where the expectation is with respect to the beam-shapes and time allocation
  prescribed by policy $\pi$, and the angular coordinates of the MUs.
We neglect the energy consumption in the beam-alignment phase,
studied in \cite{hussainenergy} for the single-user case,
 and thus assume that data communication is the most energy-hungry operation.
\vspace{-2mm}
\subsection{Markov Decision Process formulation}
We formulate the optimization problem as a MDP, with state
 given by the PDF of the angular coordinates of the two MUs, 
$S_k$ in slots $k=0,1,\dots,L$. 
During the beam-alignment phase, policy $\pi$ dictates the beam-shape in slot $k$ as
\begin{align}
\mathcal B_k=\pi_k(S_k).
\end{align}
At the end of the beam-alignment phase,
 the BS selects the time allocation $\tau_1$ for MU$_1$ and $T_{\mathrm{cm}}-\tau_1$ for MU$_2$
to be used during the communication phase. As explained previously, 
the beam-shape is chosen via (\ref{BS}) to provide coverage, and the power $P_{L,i}$
via \eqref{R1}-\eqref{R2} to support the rate demands.
Thus, policy $\pi$ dictates the time allocation as
\begin{align}
\tau_1=\pi_L(S_L).
\end{align}
Given the PDF $S_k$ and the beam-shape $\mathcal B_k$ during the beam-alignment slots,
the MUs generate the feedback $(C_{k,1},C_{k,2})$ via \eqref{feedback},
 with probability distribution
\begin{align}
\mathbb P\left(\left.C_{k,1}{=}c_1,C_{k,2}{=}c_2\right|S_k,\mathcal B_k\right)
=\!\!\!\!\!\!\!
\int \limits _{\mathcal B_k^{c_1}\times \mathcal B_k^{c_2}}\!\!\!\!\!S_k(\theta_1,\theta_2)\mathrm{d}\theta_1\mathrm d\theta_2,
\end{align}
where we have defined the set operation
\begin{align}
\label{setoper}
\mathcal A^1\equiv\mathcal A,\ \mathcal A^0\equiv [0,2\pi]\setminus \mathcal A.
\end{align}

Optimizating $\pi$ is challenging due to the continuous PDF space.
We now prove some structural properties of the model, which allow to simplify the state space.
\begin{theorem}
\label{theorem_1}
We have that
\begin{align}
\label{indep}
& S_k(\theta_1,\theta_2)=S_{k,1}(\theta_1)\cdot S_{k,2}(\theta_2) &\!\!\!\!\!\!\text{(independence)},
\\
& S_{k,i}(\theta_i)=\frac{1}{|\mathcal S_{k,i}|}\chi(\theta_i\in\mathcal S_{k,i}) &\!\!\!\!\!\!\text{(uniform distribution)}.
\label{unif}
\end{align}
Moreover, either $\mathcal S_{k,1}\equiv \mathcal S_{k,2}$
or $\mathcal S_{k,1}\cap\mathcal S_{k,2}\equiv \emptyset$.
\end{theorem}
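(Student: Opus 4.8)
The plan is to establish all three claims simultaneously by induction on the beam-alignment slot $k=0,1,\dots,L$. The base case $k=0$ is immediate from \eqref{init}: $S_0$ is the product of two marginals, each uniform on $[-\sigma/2,\sigma/2]$, so \eqref{indep} and \eqref{unif} hold and $\mathcal S_{0,1}=\mathcal S_{0,2}=[-\sigma/2,\sigma/2]$, i.e.\ we start in the equal-supports alternative.

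For the inductive step, assume \eqref{indep}, \eqref{unif}, and the equal-or-disjoint dichotomy hold at slot $k$. Since the beacon is detected without false alarms or misdetections, $\prob(c_{k,i}\mid\theta_i,\mathcal B_k)=\chi(\theta_i\in\mathcal B_k^{c_{k,i}})$ with the set operation \eqref{setoper}. Hence, using the inductive independence hypothesis, the numerator of \eqref{Bayes} factorizes as $\big[\chi(\theta_1\in\mathcal B_k^{c_{k,1}})S_{k,1}(\theta_1)\big]\big[\chi(\theta_2\in\mathcal B_k^{c_{k,2}})S_{k,2}(\theta_2)\big]$, and the normalizing double integral factorizes into the corresponding product of single integrals. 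Therefore $S_{k+1}$ is again a product of its marginals, with
\[
S_{k+1,i}(\theta_i)=\frac{\chi(\theta_i\in\mathcal B_k^{c_{k,i}})\,S_{k,i}(\theta_i)}{\int_{-\pi}^{\pi}\chi(\tilde\theta_i\in\mathcal B_k^{c_{k,i}})\,S_{k,i}(\tilde\theta_i)\,\mathrm d\tilde\theta_i},
\]
and substituting the inductive uniformity $S_{k,i}(\theta_i)=\chi(\theta_i\in\mathcal S_{k,i})/|\mathcal S_{k,i}|$ shows that $S_{k+1,i}$ is uniform on $\mathcal S_{k+1,i}=\mathcal S_{k,i}\cap\mathcal B_k^{c_{k,i}}$. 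I would pair this with a short well-definedness remark: conditioned on the history, $(\Theta_1,\Theta_2)\sim S_k$, so the BS observes feedback $(c_1,c_2)$ with probability $\prod_{i}|\mathcal S_{k,i}\cap\mathcal B_k^{c_i}|/|\mathcal S_{k,i}|$; any outcome that would make a denominator vanish therefore has probability zero, so the Bayes update, and the displayed recursion, hold almost surely.

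It remains to propagate the dichotomy through the recursion $\mathcal S_{k+1,i}=\mathcal S_{k,i}\cap\mathcal B_k^{c_{k,i}}$. If $\mathcal S_{k,1}\cap\mathcal S_{k,2}=\emptyset$, then $\mathcal S_{k+1,i}\subseteq\mathcal S_{k,i}$ keeps the two regions disjoint. If instead $\mathcal S_{k,1}=\mathcal S_{k,2}=:\mathcal S_k$, then either $c_{k,1}=c_{k,2}$, in which case $\mathcal S_{k+1,1}=\mathcal S_{k+1,2}=\mathcal S_k\cap\mathcal B_k^{c_{k,1}}$ and the supports stay equal, or $c_{k,1}\neq c_{k,2}$, in which case $\{\mathcal S_{k+1,1},\mathcal S_{k+1,2}\}=\{\mathcal S_k\cap\mathcal B_k,\ \mathcal S_k\setminus\mathcal B_k\}$, which are disjoint; this closes the induction. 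Everything above is routine except this last case split, which is the only place the structure of the problem is actually used: with a single RF chain the BS probes \emph{both} users with the same beam $\mathcal B_k$ in a slot, so any disagreement in their feedback necessarily cleaves a shared uncertainty region into the two complementary pieces $\mathcal B_k$ and $[0,2\pi]\setminus\mathcal B_k$ --- overlap, once lost, cannot reappear. I expect the only mild technical nuisance to be the measure-zero bookkeeping for degenerate feedback outcomes, which is why I would make the well-definedness remark explicit.
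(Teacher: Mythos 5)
Your proposal is correct and follows essentially the same route as the paper's Appendix A: induction on $k$ with base case \eqref{init}, factorization of the Bayes update \eqref{Bayes} to get independence and uniformity on $\mathcal S_{k+1,i}=\mathcal S_{k,i}\cap\mathcal B_k^{c_{k,i}}$, and the same case split on concordant versus discordant feedback to propagate the equal-or-disjoint dichotomy. Your added well-definedness remark about zero-probability feedback outcomes is a small refinement the paper leaves implicit, but it does not change the argument.
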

\begin{proof}
See Appendix A.
\end{proof}

The independence and uniform distribution expressed by Theorem \ref{theorem_1} 
imply that the feedback signals generated by the two MUs are statistically independent of each other, \emph{i.e.},
$\mathbb P\left(\left.C_{k,1}=c_1,C_{k,2}=c_2\right|S_k,\mathcal B_k\right)
=\mathbb P\left(\left.C_{k,1}=c_1\right|S_{k,1},\mathcal B_k\right)
\mathbb P\left(\left.C_{k,1}=c_1\right|S_{k,2},\mathcal B_k\right),
$
with the probability of ACK given by
\begin{align}
\label{pack_iid}
\mathbb P\left(\left.C_{k,i}=1\right|S_{k,i},\mathcal B_k\right)
=\frac{|\mathcal B_k\cap\mathcal S_{k,i}|}{|\mathcal S_{k,i}|},
\end{align}
since $\Theta_i$ is uniformly distributed in the support $\mathcal S_{k,i}$.
The next state $S_{k+1}$ is then a deterministic function of the PDF $S_k$, beam-shape $\mathcal B_k$ and feedback $(c_{k,1},c_{k,2})$ via Bayes' rule, as in (\ref{Bayes}), and the support $\mathcal S_{k+1,i}$ for each MU is given by
\begin{align}
\label{update_rule}
\mathcal{S}_{k+1,i}\equiv\mathcal{S}_{k,i} \cap \mathcal{B}_{k}^{C_{k,i}},\ \forall i\in\{1,2\}.
\end{align}


We define the \emph{uncertainty width} for MU$_i$ as
 $U_{k,i}\triangleq |\mathcal S_{k,i}|$. Note that, the larger $U_{k,i}$, the more the uncertainty on the angular coordinate of MU$_i$. Additionally, let 
$\rho_k\triangleq\chi(\mathcal S_{k,1}\equiv\mathcal S_{k,2})$ be the binary variable indicating whether 
$\mathcal S_{k,1}\equiv\mathcal S_{k,2}$ (the two MUs are within the same uncertainty region, $\rho_k=1$)
or 
$\mathcal S_{k,1}\cap\mathcal S_{k,2}\equiv\emptyset$ (the two MUs are in different uncertainty regions, $\rho_k=0$).
We also define $\omega_i\triangleq |\mathcal S_{k,i}\cap \mathcal B_k|$ as the beam-width within the uncertainty region of MU$_i$.
Note that, if $\rho_k=1$, then it follows that $\mathcal S_{k,1}\equiv\mathcal S_{k,2}$, hence $\omega_{k,1}=\omega_{k,2}$.
We have the following result.
\begin{theorem}
\label{thm2}
$(U_{k,1},U_{k,2},\rho_k)$ is a sufficient statistic to select $(\omega_1,\omega_2)$ at time $k$.
Given $(\omega_1,\omega_2)$, the beam-shape $\mathcal B_k$ may be arbitrary provided that $|\mathcal S_{k,i}\cap\mathcal B_k|=\omega_i,\forall i\in\{1,2\}$.
\end{theorem}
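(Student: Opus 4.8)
The plan is to argue by backward induction (dynamic programming) over the beam-alignment slots $k=L,L-1,\dots,0$, showing that the optimal value of \eqref{P1} and the optimal choice of $(\omega_1,\omega_2)$ depend on the PDF $S_k$ only through $(U_{k,1},U_{k,2},\rho_k)$. By Theorem~\ref{theorem_1}, $S_k$ is completely described by the pair of supports $(\mathcal S_{k,1},\mathcal S_{k,2})$, which either coincide ($\rho_k=1$) or are disjoint ($\rho_k=0$); the claim is that, for selecting $(\omega_1,\omega_2)$, only their Lebesgue measures $U_{k,i}$ and the flag $\rho_k$ matter.

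First I would verify that every quantity entering the Bellman recursion already has this reduced dependence. (i) \emph{Feedback law:} by \eqref{pack_iid} and the independence of $C_{k,1},C_{k,2}$ established in Theorem~\ref{theorem_1}, $\mathbb{P}(C_{k,1}{=}c_1,C_{k,2}{=}c_2\mid S_k,\mathcal B_k)=\prod_{i=1,2}(\omega_i/U_{k,i})^{c_i}(1-\omega_i/U_{k,i})^{1-c_i}$, a function of $(U_{k,1},U_{k,2},\omega_1,\omega_2)$ only. (ii) \emph{Width update:} from \eqref{update_rule}, $U_{k+1,i}=\omega_i$ when $C_{k,i}=1$ and $U_{k+1,i}=U_{k,i}-\omega_i$ when $C_{k,i}=0$, hence a deterministic function of $(U_{k,i},\omega_i,C_{k,i})$. (iii) \emph{Flag update:} if $\rho_k=0$, \eqref{update_rule} keeps each new support inside its old one, so $\rho_{k+1}=0$; if $\rho_k=1$ then $\mathcal S_{k,1}\equiv\mathcal S_{k,2}$, forcing $\omega_1=\omega_2$, and one checks that $\rho_{k+1}=1$ when $C_{k,1}=C_{k,2}$ (both supports are intersected with the same set, $\mathcal B_k$ or its complement) whereas $\rho_{k+1}=0$ when $C_{k,1}\neq C_{k,2}$ (the two supports become $\mathcal S_{k,1}\cap\mathcal B_k$ and $\mathcal S_{k,1}\setminus\mathcal B_k$, which are disjoint); thus $\rho_{k+1}$ is a deterministic function of $(\rho_k,C_{k,1},C_{k,2})$. (iv) \emph{Feasible actions:} if $\rho_k=1$ the constraint $|\mathcal S_{k,i}\cap\mathcal B_k|=\omega_i$ forces a common value $\omega_1=\omega_2\in[0,U_{k,1}]$, while if $\rho_k=0$ the disjointness of the supports lets us pick $\mathcal B_k$ as the union of an arbitrary measure-$\omega_1$ subset of $\mathcal S_{k,1}$ and a measure-$\omega_2$ subset of $\mathcal S_{k,2}$, so $(\omega_1,\omega_2)$ ranges over all of $[0,U_{k,1}]\times[0,U_{k,2}]$; either way the feasible set depends only on $(U_{k,1},U_{k,2},\rho_k)$. (v) \emph{Terminal cost:} by \eqref{BS}, $\omega_{L,i}=|\mathcal S_{L,i}|=U_{L,i}$, so the objective in \eqref{P1} at slot $L$ equals $\frac{U_{L,1}}{T_{\mathrm{fr}}}\epsilon_1(\tau_1)+\frac{U_{L,2}}{T_{\mathrm{fr}}}\epsilon_2(T_{\mathrm{cm}}-\tau_1)$, whose minimum over $\tau_1$ is a function of $(U_{L,1},U_{L,2})$ alone.

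Given these facts the induction is immediate: assuming $V_{k+1}(S_{k+1})=\tilde V_{k+1}(U_{k+1,1},U_{k+1,2},\rho_{k+1})$, the Bellman equation $V_k(S_k)=\min_{\mathcal B_k}\sum_{c_1,c_2}\mathbb{P}(c_1,c_2\mid S_k,\mathcal B_k)\,\tilde V_{k+1}(U_{k+1,1},U_{k+1,2},\rho_{k+1})$ has, by (i)--(iv), both a summand and a feasible minimization set that depend on $S_k$ only through $(U_{k,1},U_{k,2},\rho_k)$; hence so do $V_k$ and any minimizing $(\omega_1^\star,\omega_2^\star)$, which is the sufficient-statistic claim (the base case $k=L$ being (v)). The second assertion comes out of the same computation: once $(\omega_1,\omega_2)$ is fixed, nothing in (i)--(iv) refers to any further property of $\mathcal B_k$, so every beam-shape with $|\mathcal S_{k,i}\cap\mathcal B_k|=\omega_i$ attains the same value. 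I expect step (iii) to be the main obstacle: one must confirm that the \emph{equal-or-disjoint} dichotomy of Theorem~\ref{theorem_1} is stable under \eqref{update_rule} for every feedback pair, and pin down exactly which $(c_1,c_2)$ toggles $\rho$; the degenerate actions $\omega_i\in\{0,U_{k,i}\}$, giving probability-$0$ or probability-$1$ feedback, should be flagged but are harmless, since the corresponding support update is trivial.
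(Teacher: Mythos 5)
Your proposal is correct and follows essentially the same route as the paper's Appendix B: a backward induction on the value function, using the feedback distribution \eqref{pack_iid}, the support update \eqref{update_rule}, the identity $\rho_{k+1}=\rho_k\chi(C_{k,1}=C_{k,2})$, and the terminal cost \eqref{VL} to show that the Bellman recursion depends on $S_k$ only through $(U_{k,1},U_{k,2},\rho_k)$ and on $\mathcal B_k$ only through $(\omega_1,\omega_2)$. Your explicit check that the feasible action set itself (including the forced constraint $\omega_1=\omega_2$ when $\rho_k=1$) is determined by the reduced state is a point the paper handles only in passing, but it does not change the argument.
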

\begin{proof}
See Appendix B.
\end{proof}

Therefore, in the following we can focus on the design of the beam-widths
$(\omega_1,\omega_2)$.
With this notation, the probability of ACK can be written as
\begin{align}
\label{PC}
\mathbb P\left(\left.C_{k,i}=1\right|U_{k,i},\omega_{k,i}\right)
=\frac{\omega_{k,i}}{U_{k,i}}.
\end{align}
Thus, given the state $(U_{k,1},U_{k,2},\rho_k)$ in slot $k=0,1,\dots,L{-}1$ and the beam-widths $(\omega_{k,1},\omega_{k,2})$,
 the new state becomes
 $(U_{k+1,1},U_{k+1,2},\rho_{k+1})$
 where
 \begin{align}
 \label{new_state_U}
U_{k+1,i}=
\left\{
\begin{array}{ll}
|\mathcal B_k\cap\mathcal S_{k,i}|=\omega_{k,i} & C_{k,i}=1,
\\
|\mathcal B_k^0\cap\mathcal S_{k,i}|=U_{k,i}-\omega_{k,i} & C_{k,i}=0,
\end{array}
\right.
 \end{align}
and
\begin{align}
\label{new_state_r}
\rho_{k+1}=
\left\{
\begin{array}{ll}
\rho_{k} & c_{k,1}=c_{k,2},
\\
0 & c_{k,1}\neq c_{k,2},
\end{array}
\right.
\end{align}
 with probabilities given by \eqref{PC}.
 The rule
\eqref{new_state_U} expresses the fact that, if an ACK is received, then 
the support of $S_{k+1,i}$ becomes
$\mathcal{S}_{k+1,i}\equiv\mathcal{S}_{k,i} \cap \mathcal{B}_{k}$ as given by \eqref{update_rule},
with width $\omega_{k,i}=|\mathcal{S}_{k+1,i}|$.
In contrast, if a NACK is received, then MU$_i$ is located in the complement region
$\mathcal{S}_{k+1,i}\equiv\mathcal{S}_{k,i}\setminus\mathcal{B}_{k}$, with width $U_{k,i}-\omega_{k,i}$. Rule
\eqref{new_state_r} describes the evolution of $\rho_k$. When $\rho_k=0$, the two MUs are located in disjoint uncertainty regions. In the next slot, they will still be in disjoint regions, irrespective of the feedback received at the BS.
In contrast, when $\rho_k=1$,  if the MUs send discordant feedback signals ($C_{k,1}\neq C_{k,2}$), the BS infers that they are located in disjoint
uncertainty regions, hence $\rho_{k+1}=0$; if the MUs send concordant feedback signals ($C_{k,1}=C_{k,2}$),
the BS infers that they are still in the same uncertainty region, hence $\rho_{k+1}=1$.
The optimal beam-alignment algorithm and MU scheduling can be found via dynamic programming (DP).
At the beginning of the communication phase,
given the state $(U_{L,1},U_{L,2},\rho_L)$
the optimal time allocation $\tau_1$ is the minimizer of (see (\ref{P1})
and (\ref{BS}) with $\omega_{L,i}=|\mathcal B_{L,i}|=|\mathcal S_{L,i}|=U_{L,i}$)
\begin{align}
\label{VL}
\!\!\!\!V_L\!(U_{L,1},\!U_{L,2},\!\rho_L)
\!=\!\!\!\!\!
\min_{\tau_1\in(0,T_{\mathrm{cm}})}\!\!\!\!
U_{L,1}\epsilon_1\!\left(\tau_1\right)
\!{+}U_{L,2}\epsilon_2\!\left(T_{\mathrm{cm}}\!{-}\tau_1\right)\!.\!\!
\end{align}
  Note that the objective function is convex in $\tau_1\in(0,T_{\mathrm{cm}})$, and it diverges for $\tau_1\to0$
  and $\tau_1\to T_{\mathrm{cm}}$.
  Thus, the optimal $\tau_1^*$ is the unique solver of
  \begin{align}
     \label{optimal_tau}
\frac{\epsilon_2^\prime\left(T_{\mathrm{cm}}-\tau_1^*\right)}
 {\epsilon_1^\prime\left(\tau_1^*\right)}=
 \frac{U_{L,1}}{U_{L,2}},
  \end{align}
  where $\epsilon_i^\prime(\tau)$
  is the first order derivative of $\epsilon_i(\tau)$ with respect to $\tau$.
  The function $V_L(U_{L,1},U_{L,2},\rho_L)$ denotes the cost-to-go function at the beginning of the communication phase. During the beam-alignment phase (slots $k=0,1,\dots,L-1$), the optimal value function for the cases $\rho_k=1$ and $\rho_k=0$  is computed recursively as
  \begin{align}
  \label{r_1}
&\!\!\!\!V_k(U,U,1)
{=}
\!\!\!\min_{\omega\in[0,U]}
\mathbb E\!\!\left[V_{k{+}1}(U_{k{+}1,1},U_{k{+}1,2},\rho_{k{+}1})\left|\!\!
\begin{array}{l}
\omega_{k,i}{=}\omega,\\
U_{k,i}{=}U,
\\ 
\rho_k{=}1
\end{array}
\right.\!\!\!
\right]
\nonumber \\&
\!\!\!\!=
\min_{\omega\in[0,U]}
\frac{\omega^2}{U^2}V_{k+1}(\omega,\omega,1)
{+}
\left(1{-}\frac{\omega}{U}\right)^2V_{k+1}(U{-}\omega,U{-}\omega,1)
\nonumber \\&
+\frac{\omega}{U}\left(1{-}\frac{\omega}{U}\right)
\bigr[V_{k+1}(\omega,U{-}\omega,0)+V_{k+1}(U{-}\omega,\omega,0)\bigr]
  \end{align}
and
\begin{align}
\label{r_0}
&\!\!\!\!\!V_k(U_1,U_2,0)
{=}\!\!\!\min_{\omega_i\in[0,U_i]}\!\!
\!\!\mathbb E\!\!\left[\!V_{k+1}(U_{k+1,1},U_{k+1,2},\rho_{k+1})\!\!\left|\!\!
\begin{array}{l}
\omega_{k,i}{=}\omega_i,\\
U_{k,i}{=}U_i,
\\ 
\rho_k{=}0
\end{array}
\right.\!\!\!\!\!
\right]
\nonumber \\&=
\nonumber
\min_{\omega\in[0,U_{k}]}
\frac{\omega_1}{U_1}\frac{\omega_2}{U_2}V_{k+1}(\omega_1,\omega_2,0)
\\&
+\frac{\omega_1}{U_1}\left(1{-}\frac{\omega_2}{U_2}\right)
V_{k+1}(\omega_1,U_2{-}\omega_2,0)\nonumber \\&
+\left(1{-}\frac{\omega_1}{U_1}\right)\frac{\omega_2}{U_2}
V_{k+1}(U_1{-}\omega_1,\omega_2,0)
\nonumber
\\&
+\left(1-\frac{\omega_1}{U_1}\right)
\left(1-\frac{\omega_2}{U_2}\right)
V_{k+1}(U_1-\omega_1,U_2-\omega_2,0).
\end{align}
These expressions are obtained by computing the expectation of
$V_{k+1}(U_{k+1,1},U_{k+1,2},\rho_{k+1})$, with respect to the realization of the feedback
signals $(C_{k,1},C_{k,2})$, with distribution \eqref{PC}, and the state dynamics given by \eqref{new_state_U} and \eqref{new_state_r}.
  
  In the next theorem, 
  we prove the optimality of a \emph{bisection} beam-alignment algorithm, which selects the
  beam-widths as $\omega_{k,i}=U_{k,i}/2$ in each slot.
 \begin{theorem}
 \label{thm3}
 The optimal beam-widths during the beam-alignment phase are given by
 \begin{align}
 \omega_{k,i}=\frac{1}{2}U_{k,i}.
 \end{align}
Then,
 \begin{align}
\bar P_{\mathrm{avg}}
=
\frac{\sigma}{T_{\mathrm{fr}}2^L}
\Big[\epsilon_1\left(\tau_1^*\right)+\epsilon_2\left(T_{\mathrm{cm}}-\tau_1^*\right)\Big],
\label{optvalue}
 \end{align}
 where $\tau_1^*$ uniquely satisfies
   \begin{align}
   \label{cbv}
\frac{\epsilon_2^\prime\left(T_{\mathrm{cm}}-\tau_1^*\right)}
 {\epsilon_1^\prime\left(\tau_1^*\right)}=1.
  \end{align}
\end{theorem}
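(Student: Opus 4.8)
My plan is to prove Theorem~\ref{thm3} by backward induction on $k$, establishing the stronger statement that, with $W(U_1,U_2)\triangleq V_L(U_1,U_2,\rho)$ (well defined since the right-hand side of \eqref{VL} does not depend on $\rho$),
\begin{align}
V_k(U_1,U_2,\rho)=\frac{1}{2^{L-k}}\,W(U_1,U_2)\quad\text{for all }k=0,\dots,L,\ \rho\in\{0,1\},
\end{align}
with $\omega_{k,i}=\tfrac12U_{k,i}$ the unique minimizer in \eqref{r_1}--\eqref{r_0}. Granting this, Theorem~\ref{thm3} follows: by \eqref{init} the initial state is $(\sigma,\sigma,1)$, so $\bar P_{\mathrm{avg}}=\tfrac1{T_{\mathrm{fr}}}V_0(\sigma,\sigma,1)=\tfrac{\sigma}{T_{\mathrm{fr}}2^L}W(1,1)$ by homogeneity, while $W(1,1)=\min_{\tau_1\in(0,T_{\mathrm{cm}})}[\epsilon_1(\tau_1)+\epsilon_2(T_{\mathrm{cm}}-\tau_1)]$; the first-order condition $\epsilon_1^\prime(\tau_1^*)=\epsilon_2^\prime(T_{\mathrm{cm}}-\tau_1^*)$ is exactly \eqref{cbv} (it is also \eqref{optimal_tau} at $U_{L,1}=U_{L,2}$), and it is the unique minimizer because the objective is convex and diverges at the endpoints.

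First I would record the properties of $W$ used throughout. As a pointwise minimum over $\tau_1$ of maps linear in $(U_1,U_2)$, $W$ is positively homogeneous of degree one, jointly concave, strictly increasing in each argument (each $\epsilon_i$ being strictly positive and strictly decreasing), and strictly positive off the origin; I also expect the proof to use the strict convexity of the $\epsilon_i$. The base case $k=L$ holds by definition of $W$. For the inductive step, assume $V_{k+1}(\cdot,\cdot,\rho)=2^{-(L-k-1)}W(\cdot,\cdot)$. Achievability of the claimed value is immediate: taking the beam-widths equal to $U_{k,i}/2$ makes $U_{k+1,i}=U_{k,i}/2$ deterministic in \eqref{new_state_U} regardless of the feedback, so evaluating \eqref{r_1}--\eqref{r_0} at this choice and using homogeneity gives $2^{-(L-k-1)}W(U_1/2,U_2/2)=2^{-(L-k)}W(U_1,U_2)$; hence $V_k\le 2^{-(L-k)}W$.

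The remaining and principal task is the matching lower bound $V_k\ge 2^{-(L-k)}W$. Substituting the inductive hypothesis into \eqref{r_1}--\eqref{r_0}, the expectations there equal $2^{-(L-k-1)}\,\mathbb{E}[W(U_{k+1,1},U_{k+1,2})]$, where $(U_{k+1,1},U_{k+1,2})$ takes the four values $(\omega_1,\omega_2),(\omega_1,U_2-\omega_2),(U_1-\omega_1,\omega_2),(U_1-\omega_1,U_2-\omega_2)$ under the product law induced by \eqref{PC}; everything thus reduces to the key estimate
\begin{align}
\mathbb{E}\big[W(U_{k+1,1},U_{k+1,2})\big]\ \ge\ \tfrac12\,W(U_1,U_2),
\end{align}
with equality only when $\omega_{k,i}=U_{k,i}/2$ (this also covers the $\rho=1$ recursion \eqref{r_1}, the special case $U_1=U_2$, $\omega_1=\omega_2$). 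The idea is that, with $p\triangleq\omega_1/U_1$ and $q\triangleq\omega_2/U_2$, the four outcome points split into the two antipodal pairs $\{(pU_1,qU_2),((1-p)U_1,(1-q)U_2)\}$ and $\{(pU_1,(1-q)U_2),((1-p)U_1,qU_2)\}$, each summing to $(U_1,U_2)$ and hence sharing the midpoint $(U_1/2,U_2/2)$; combining joint concavity of $W$ with the homogeneity identity $W(U_1/2,U_2/2)=\tfrac12W(U_1,U_2)$, monotonicity, and the convexity of the $\epsilon_i$, one pushes the probability-weighted average of $W$ over the four points down to its minimum, attained exactly at $p=q=1/2$.

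I expect this estimate to be the main obstacle. It cannot follow from concavity, homogeneity and monotonicity of $W$ alone: for instance $W(a,b)=\min(a,b)$ satisfies all three yet violates it (with $U_1\ne U_2$, $p=1/2$ and small $q$), and this $W$ is excluded only because the $\epsilon_i$ are strictly positive — so the argument must exploit finer, quantitative structure of $W$, naturally the strict convexity of the $\epsilon_i$, which makes the optimal time-split in \eqref{optimal_tau} a strictly monotone function of $U_1/U_2$ and rules out both the off-diagonal ($p\ne q$) and the asymmetric beam-width choices. Once the estimate is established the induction closes, and specializing to the initial state $(\sigma,\sigma,1)$ yields \eqref{optvalue} together with \eqref{cbv}.
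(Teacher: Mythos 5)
Your induction skeleton matches the paper's: both establish $V_k(U_1,U_2,\rho)=V_L(U_1/2^{L-k},U_2/2^{L-k},0)$ by backward induction, with achievability immediate from the deterministic halving under bisection, and both reduce the lower bound to controlling the probability-weighted average of $V_{k+1}$ over the four feedback outcomes. However, the proposal stops exactly where the real work begins. The key estimate $\mathbb{E}[W(U_{k+1,1},U_{k+1,2})]\ge\frac12 W(U_1,U_2)$ is stated, correctly identified as the main obstacle, and correctly shown not to follow from concavity, homogeneity and monotonicity of $W$ alone (your $\min(a,b)$ counterexample is apt) --- but then it is not proved. The sentence ``one pushes the probability-weighted average of $W$ over the four points down to its minimum'' is the claim itself, not an argument; and the tools you propose point in the wrong direction, since joint concavity of $W$ yields via Jensen an \emph{upper} bound on an average, whereas a lower bound is needed here.

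The paper closes this gap with a specific device absent from your proposal: substitute $g(x_1,x_2)\triangleq x_1x_2V_{k+1}(x_1,x_2,0)$. The probability weights $\frac{\omega_1}{U_1}\frac{\omega_2}{U_2}$, $\frac{\omega_1}{U_1}\left(1-\frac{\omega_2}{U_2}\right)$, etc.\ in \eqref{r_0} are then exactly absorbed into $g$, turning the weighted average into a \emph{uniform} average of $g$ over the four points $(\omega_1,\omega_2),(U_1-\omega_1,\omega_2),(\omega_1,U_2-\omega_2),(U_1-\omega_1,U_2-\omega_2)$, divided by $U_1U_2$. Since the pairs $\{\omega_1,U_1-\omega_1\}$ and $\{\omega_2,U_2-\omega_2\}$ each have midpoint $U_i/2$, coordinate-wise convexity of $g$ and two applications of Jensen's inequality with equal weights give the lower bound $4g(U_1/2,U_2/2)/(U_1U_2)$, which bisection attains. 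The remaining (and lengthiest) part of the paper's proof is establishing that $g$ is convex in each argument separately; this requires the envelope theorem applied to the minimizer $\tau_1(x_1)$ satisfying \eqref{optimal_tau}, an implicit differentiation to obtain $\tau_1^\prime(x_1)$, and an explicit sign analysis of a function of $y_i=T_{\mathrm{fr}}R_i/\tau_i$ built from $\epsilon_i,\epsilon_i^\prime,\epsilon_i^{\prime\prime}$. None of this quantitative content appears in your proposal, so the proof is incomplete at its central step; the antipodal-pair observation you make is a genuine ingredient, but without the weight-absorbing transformation the four points do not carry equal weights and midpoint arguments do not apply.
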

\begin{proof}
See Appendix C.
\end{proof}

Note that, in the special case $\gamma_1=\gamma_2=\gamma$, \eqref{cbv} yields
\begin{align}
\label{tau_r1_r2}
\tau_1^*=\frac{R_1}{R_1+R_2}T_{\mathrm{cm}}
\end{align}
and 
 \begin{align}\label{speccase}
\bar P_{\mathrm{avg}}
=
\frac{\sigma}{\gamma 2^L}
\frac{T_{\mathrm{cm}}}{T_{\mathrm{fr}}}
\left(2^{\frac{T_{\mathrm{fr}}}{T_{\mathrm{cm}}}(R_1+R_2)}-1\right).
 \end{align}
 
\section{Numerical Results}
\label{numres}
 In this section, we compare the total power consumption
 versus the sum rate $R_{\mathrm{tot}}=R_1+R_2$ under:
 \begin{itemize}
 \item The proposed joint beam-alignment bisection algorithm.
  \item Single-user beam-alignment \cite{hussain2017throughput}: in this scheme, odd frames are allocated to MU$_1$ using $L_1$ slots for beam-alignment,
 even frames to MU$_2$ using $L_2$ slots for beam-alignment. 
 Beam-alignment is executed using the bisection scheme, whose optimality has been proved 
 in \cite{hussain2017throughput} for the single user scheme.
 To achieve the target rate demand $R_i$ over a period of two frames, 
 the rate demand for MU$_i$ is set to $2\times R_i$ in the corresponding allocated frame.
 \item Joint exhaustive search: 
the BS scans exhaustively
 up to $K=2^L$ beams, each with beam-width $2\pi/K$, starting from beam index $1$ to beam index $K$.
 When both MUs are detected, the communication phase starts, using the TDM scheme described in Section \ref{comm}. 
If MU$_i$ is located in the beam with index $\mathrm{id}_i$, beam-alignment will take
 $\max\{\mathrm{id}_1,\mathrm{id}_2\}$ slots, followed by data communication over 
 the remaining interval $T_{\mathrm{fr}}-\max\{\mathrm{id}_1,\mathrm{id}_2\}T$.
 \end{itemize}
 The parameters $L$, $L_1$, $L_2$ are optimized to achieve the minimum power consumption, constrained to $L,L_1,L_2\leq 7$. Thus, the minimum beam resolution is given by $2\pi/128$.
 
 We consider this scenario: $T_{\mathrm{fr}}=2\mathrm{ms}$, 
 $\sigma=2\pi$, $T=10\mu\mathrm{s}$, $d_{i}=50\mathrm{m}$, $W_{tot}=500\mathrm{MHz}$, $\lambda=5\mathrm{mm}$ (carrier frequency $60\mathrm{GHz}$), $\alpha=2$,
 $N_0=-174\mathrm{dBm}$. It follows that $\gamma_1=\gamma_2$.
 We vary $R_1$ and let $R_2=\psi R_1$, for a fixed parameter $\psi\in[0,1]$.
 
 
  
The results are plotted  in Fig. \ref{power_rate_1}. We notice that, when the rate for both MUs are equal ($\psi=1$), both joint and single-user beam-alignment have the same performance.
We note that
 the power consumption under the joint beam-alignment scheme with bisection 
 is independent of $\psi$, but only depends on the sum rate, as can be seen in  \eqref{speccase}. Using a similar argument as to derive \eqref{speccase}, the same holds under joint exhaustive search.
 In contrast, the power consumption under
 single-user beam-alignment is highly affected by $\psi$. This is due to the fact that 
 an entire frame is allocated to MU$_2$, despite its rate demand is only a fraction $\psi$ of that of MU$_1$. This causes great imbalances in the power allocated to the two MUs (such imbalance disappears when $\psi=1$, so that the rate demands are the same). 
Instead, with joint beam-alignment, the two MUs are scheduled optimally based on TDM, yielding significant power savings.
We note that the joint beam-alignment scheme with bisection has the least power consumption, with $3\mathrm{dB}$ power saving compared to joint exhaustive search,
and up to $7\mathrm{dB}$ compared to single-user beam-alignment, for moderate imbalances on the rate demands ($\psi=0.5$).

\begin{figure}
\begin{center}
\includegraphics[width=0.9\columnwidth]{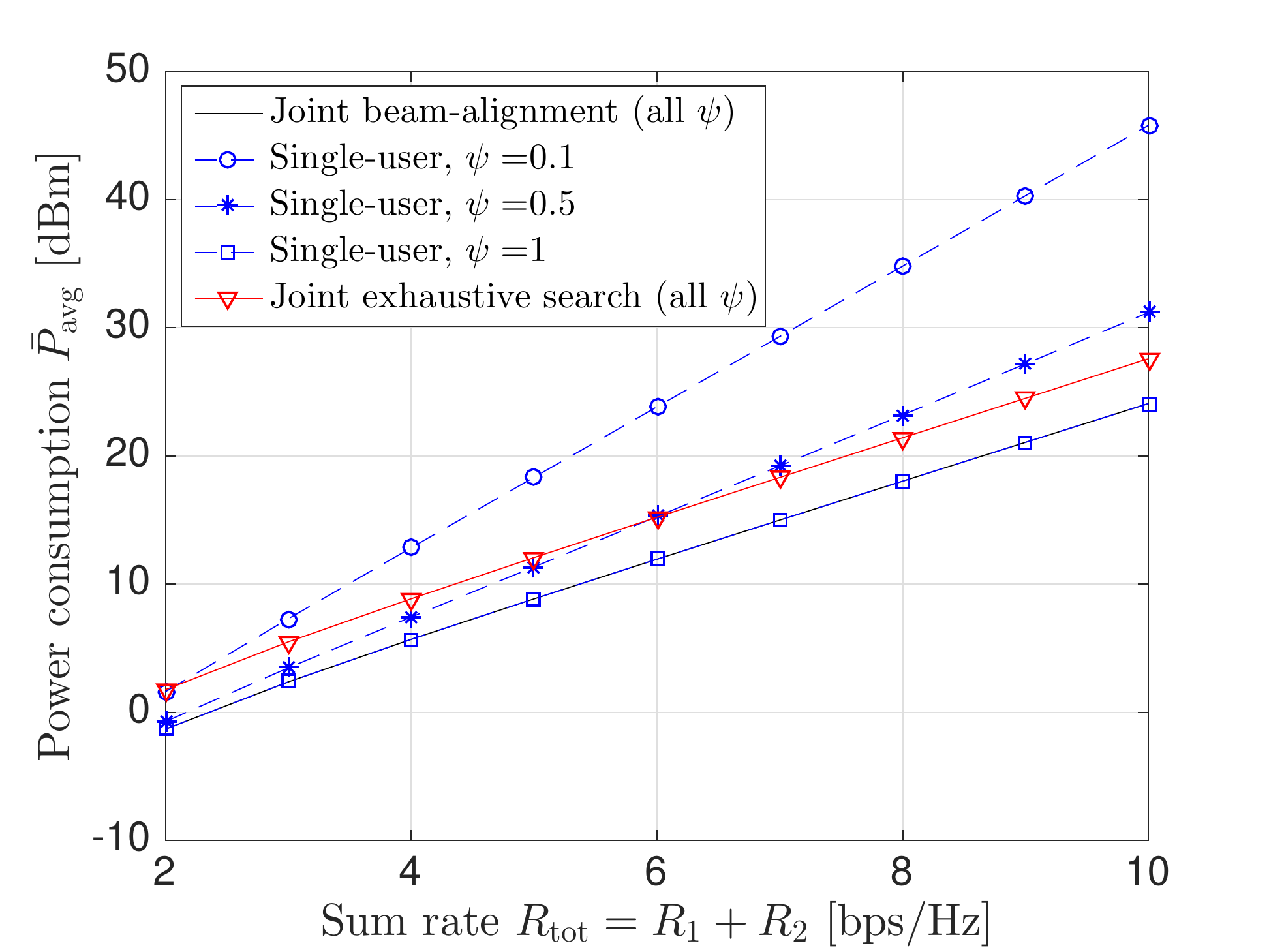}
\caption{Power versus sum rate under different algorithms.
}
\label{power_rate_1}
\end{center}
\vspace{-7mm}
\end{figure} 
\vspace{-2mm}
\section{Conclusions}
\label{conclu}
In this paper, we studied the design of energy efficient joint beam-alignment protocols for two users, with the goal to minimize the power consumption during data transmission, subject to rate constraints for both users, under analog beamforming constraints. We prove that a bisection search algorithm is optimal. In addition we  schedule
optimally the two users during data communication via time division multiplexing, based on the outcome of beam-alignment.
Our numerical results show significant power savings compared to exhaustive search and a single-user beam-alignment scheme taking place separately for each user.

\vspace{-4mm}
\section*{Appendix A: Proof of Theorem \ref{theorem_1}}
Note that \eqref{feedback} along with Bayes' rule \eqref{Bayes} imply \eqref{update_rule}.
We prove the theorem by induction.
The induction hypothesis holds for $k=0$, see \eqref{init}.
Now, assume that it holds in slot $k\geq 0$. We show that 
this implies that it holds in slot $k+1$ as well.
Thus, assume that either $\mathcal S_{k,1}\equiv \mathcal S_{k,2}$
or $\mathcal S_{k,1}\cap\mathcal S_{k,2}\equiv \emptyset$.
First, let us consider the case $\mathcal S_{k,1}\equiv \mathcal S_{k,2}$ with $C_{k,1}=C_{k,2}$.
From \eqref{update_rule} we have that 
\begin{align*}
\mathcal{S}_{k+1,1}\equiv\mathcal{S}_{k,1} \cap \mathcal{B}_{k}^{C_{k,1}}
\equiv\mathcal{S}_{k,2} \cap \mathcal{B}_{k}^{C_{k,2}}\equiv\mathcal{S}_{k+1,2},
\end{align*}
and thus  $\mathcal{S}_{k+1,1}\equiv\mathcal{S}_{k+1,2}$.
For all other cases, we have that 
\begin{align*}
\mathcal{S}_{k+1,1}\cap\mathcal{S}_{k+1,2}
\equiv(\mathcal{S}_{k,1}\cap\mathcal{S}_{k,2}) \cap (\mathcal{B}_{k}^{C_{k,1}} \cap \mathcal{B}_{k}^{C_{k,2}})
\equiv\emptyset,
\end{align*}
since either  $\mathcal{S}_{k,1}\cap\mathcal{S}_{k,2}\equiv\emptyset$ from the induction hypothesis,
or $\mathcal{S}_{k,1}\equiv\mathcal{S}_{k,2}$ but $C_{k,1}\neq C_{k,2}$, yielding
 $\mathcal{B}_{k}^{C_{k,1}}\cap\mathcal{B}_{k}^{C_{k,2}}\equiv\mathcal{B}_{k}^{0}\cap\mathcal{B}_{k}^{1}\equiv\emptyset$.
 Thus, it follows that either $\mathcal S_{k,1}\equiv \mathcal S_{k,2}$
or $\mathcal S_{k,1}\cap\mathcal S_{k,2}\equiv \emptyset$.

Now, assume that 
$S_k$ satisfies \eqref{indep} and \eqref{unif} in slot $k$.
By specializing Bayes' rule \eqref{Bayes} to this case, we obtain
\begin{align}
&S_{k+1}(\theta_1,\theta_2)=
\frac{
\prod_{i\in\{1,2\}}\chi(\theta_i\in\mathcal S_{k,i}\cap\mathcal B_k^{c_{k,i}})
}{\int _{-\pi}^{\pi}\int_{-\pi}^{\pi}
\prod_{i\in\{1,2\}}\chi(\tilde\theta_i\in\mathcal S_{k,i}\cap\mathcal B_k^{c_{k,i}})
\mathrm d\tilde\theta_{1}\mathrm d\tilde\theta_{2}},
\nonumber
\end{align}
where we have used \eqref{feedback} and \eqref{setoper},
and the fact that 
$\chi(\theta_i\in\mathcal B_k^{c_{k,i}})\chi(\theta_i\in\mathcal S_{k,i})
=\chi(\theta_i\in\mathcal S_{k,i}\cap\mathcal B_k^{c_{k,i}})$. Solving the integral in 
the denominator and using \eqref{update_rule} we obtain
\begin{align}
&S_{k+1}(\theta_1,\theta_2)=
\prod_{i\in\{1,2\}}
\frac{1}{|\mathcal S_{k+1,i}|}\chi(\theta_i\in\mathcal S_{k+1,i}),
\nonumber
\end{align}
thus proving the induction step. The theorem is proved.

\section*{Appendix B: Proof of Theorem \ref{thm2}}
We prove this theorem by induction.
Let $V_k(S_k)$ be the value function from state $S_k$ in slot $k$.
 At the beginning of the communication phase, from \eqref{P1} we have that
 \begin{align}
V_L(S_L)=\underset{\tau_1}{\text{min}}\ 
U_{L,1}\epsilon_1\left(\tau_1\right)
+U_{L,2}\epsilon_2\left(T_{\mathrm{cm}}-\tau_1\right),
\end{align}
since the condition \eqref{BS} implies
$\omega_{L,i}=|\mathcal B_{L,i}|=|\mathcal S_{L,i}|=U_{L,i}$.
Therefore,
 \begin{align}
V_L(S_L)=V_L(|S_{L,1}|,|S_{L,2}|,\chi(S_{L,1}\equiv S_{L,2})).
 \end{align}
Now, let $k<L$ and assume that $(U_{k+1,1},U_{k+1,2},\rho_{k+1})$
is a sufficient statistic to choose $\omega_{j,i}$ for $j\geq k+1$, with $\mathcal B_j$
such that $|\mathcal S_{j,i}\cap\mathcal B_j|=\omega_{j,i},\forall i\in\{1,2\}$, \emph{i.e.},
 \begin{align*}
V_{k+1}(S_{k+1})=V_{k+1}(|S_{k+1,1}|,|S_{k+1,2}|,\chi(S_{k+1,1}\equiv S_{k+1,2})).
 \end{align*}
The dynamic programming iteration yields
\begin{align}
&V_k(S_k)=
\min_{\mathcal B_k}\mathbb E\left[V_{k+1}(S_{k+1})|\mathcal B_k,\mathcal S_k\right]
\\&
\!{=}
\min_{\mathcal B_k}\mathbb E\left[
V_{k+1}(|S_{k+1,1}|,|S_{k+1,2}|,\chi(S_{k+1,1}{\equiv}S_{k+1,2}))
|\mathcal B_k,\mathcal S_k\right]\!\!,
\nonumber
 \end{align}
 where we have used the induction hypothesis.
 
 Note that $\mathcal{S}_{k+1,i}$ is obtained via \eqref{update_rule}.
 If $\mathcal{S}_{k,1}\equiv\mathcal{S}_{k,2}$ ($\rho_k=1$), it follows that
 $\mathcal{S}_{k+1,1}\equiv\mathcal{S}_{k+1,2}$  ($\rho_{k+1}=1$) iff
 $C_{k,1}=C_{k,2}$, yielding $\rho_{k+1}=\chi(C_{k,1}=C_{k,2})$.
 If  $\mathcal{S}_{k,1}\cap\mathcal{S}_{k,2}\equiv\emptyset$ ($\rho_k=1$), it follows that
  $\mathcal{S}_{k+1,1}\cap\mathcal{S}_{k+1,2}\equiv\emptyset$, hence $\rho_{k+1}=0$.
  Therefore, we can write
  \begin{align}
  \rho_{k+1}=\rho_k\chi(C_{k,1}=C_{k,2}).
  \end{align}

 By computing the expectation with respect to the feedback distribution given by \eqref{pack_iid},
 and using \eqref{update_rule}, we then obtain
 \begin{align*}
&V_k(S_k)
=
\min_{\mathcal B_k}
\sum_{(c_1,c_2)\in\{0,1\}^2}
\frac{|\mathcal B_k^{c_1}\cap\mathcal S_{k,1}|}{|\mathcal S_{k,1}|}
\frac{|\mathcal B_k^{c_2}\cap\mathcal S_{k,2}|}{|\mathcal S_{k,2}|}
\\&
\times
V_{k+1}(|\mathcal{S}_{k,1} \cap \mathcal{B}_{k}^{c_1}
|,|\mathcal{S}_{k,2} \cap \mathcal{B}_{k}^{c_2}|,\rho_k\chi(c_1=c_2)
)).
 \end{align*}
 Now, letting $\omega_{k,i}\triangleq|\mathcal{S}_{k,i}\cap \mathcal{B}_{k}|$
 and $|\mathcal{S}_{k,i}|=U_{k,i}$, we find that
 $|\mathcal{S}_{k,i}\cap \mathcal{B}_{k}^0|=U_{k,i}-\omega_{k,i}$, yielding
  \begin{align*}
&V_k(S_k)
=
\min_{\mathcal B_k}
\sum_{(c_1,c_2)\in\{0,1\}^2}
\prod_{i\in\{1,2\}}\frac{\omega_{k,i}^{c_i}(U_{k,i}-\omega_{k,i})^{1-c_i}}{U_{k,i}}\
\\&
\times
V_{k+1}(
\omega_{k,1}^{c_1}(U_{k,1}-\omega_{k,1})^{1-c_1}
,
\omega_{k,2}^{c_2}(U_{k,2}-\omega_{k,2})^{1-c_2}
\\&\qquad\qquad\qquad\qquad\qquad\qquad\qquad\qquad
,\rho_k\chi(c_1=c_2)
).
 \end{align*}
 Note that, given $(\omega_{k,1},\omega_{k,2})$
 and $(U_{k,1},U_{k,2},\rho_k)$,
 the objective function  is independent of $\mathcal B_k$ and $S_k$.
 Thus, the minimization over $\mathcal B_k$ can be restricted to a minimization over 
 $(\omega_{k,1},\omega_{k,2})$, with the additional constraint that
 $\omega_{k,1}=\omega_{k,2}$ if $\mathcal S_{k,1}\equiv\mathcal S_{k,2}$ ($\rho_{k}=0$),
 yielding
\begin{align}
V_k(S_k)=V_k(U_{k,1},U_{k,2},\rho_k).
\end{align}
The induction step is proved, hence the theorem.

  \section*{Appendix C: Proof of Theorem \ref{thm3}}
  We prove the theorem by induction. In particular, we show that, for all $k=0,1,\dots, L$,
\begin{align}
\label{dfg}
V_k(U_1,U_2,\rho)=V_L\left(\frac{U_1}{2^{L-k}},\frac{U_2}{2^{L-k}},0\right),\ \forall\rho\in\{0,1\}.
\end{align}
This condition clearly holds for $k=L$, since
$V_L(U_1,U_2,\rho)=V_L\left(U_1,U_2,0\right)$ from \eqref{r_0}. Thus, let $k<L$ and assume that 
\begin{align}
V_{k+1}(U_1,U_2,\rho)=V_L\left(\frac{U_1}{2^{L-k-1}},\frac{U_2}{2^{L-k-1}},0\right).
\end{align}
We prove that this implies (\ref{dfg}).
$V_{k}$ is computed from $V_{k+1}$ via DP, as in (\ref{r_1}) and \eqref{r_0}.

We start from the case $\rho_k=0$, and then consider the case $\rho_k=1$ (which implies $U_{k,1}=U_{k,2}$ 
and $\omega_{k,1}=\omega_{k,2}$).
Let
\begin{align}
g(x_1,x_2)\triangleq x_1x_2V_{k+1}(x_1,x_2,0).
\end{align}
Then, we can write the DP recursion \eqref{r_0} as
\begin{align}
\label{recursion_DP}
& V_{k}(U_1,U_2,0)
=\min_{\omega_i\in[0,U_i]}
\frac{
\frac{1}{2}g(\omega_1,\omega_2)
+\frac{1}{2}g(U_1-\omega_1,\omega_2)
}{U_1U_2/2}
\nonumber \\&
+\frac{
\frac{1}{2}g(\omega_1,U_2-\omega_2)
+\frac{1}{2}g(U_1-\omega_1,U_2-\omega_2)
}{U_1U_2/2}.
\end{align}
We denote the objective function in \eqref{recursion_DP} as $h(\omega_1,\omega_2)$, so that
we can rewrite $V_{k}(U_1,U_2,0)=\min_{\omega_i\in[0,U_i]}h(\omega_1,\omega_2)$.
In the final part of the proof, we will show that $g(x_1,x_2)$ is a convex function of $x_i,i\in\{1,2\}$ (although not necessarily jointly convex with respect to $(x_1,x_2)$).
By applying Jensen's inequality to $h(\omega_1,\omega_2)$ in \eqref{recursion_DP},
first with respect to the first argument of the function $g(\cdot,\cdot)$, and then with respect to the second argument,
it follows that 
\begin{align*}
&h(\omega_1,\omega_2)
\geq 
\frac{
\frac{1}{2}g\left(\frac{U_1}{2},\omega_2\right)
+\frac{1}{2}g\left(\frac{U_1}{2},U_2-\omega_2\right)
}{U_1U_2/4}
\geq
\frac{
g\left(\frac{U_1}{2},\frac{U_2}{2}\right)
}{U_1U_2/4}.
\end{align*}

Thus, it follows that 
\begin{align}
& V_{k}(U_1,U_2,0)
\geq 
4\frac{
g(U_1/2,U_2/2)
}{U_1U_2}.
\end{align}
Indeed, it can be seen by inspection that such lower bound is achievable by the bisection policy 
$\omega_i=U_i/2$, which proves the induction step for the case $\rho_k=0$.

We now consider the case $\rho_k=1$. Using the fact that $V_{k+1}(U_1,U_2,0)=V_{k+1}(U_1,U_2,1)$
from the  induction hypothesis, from \eqref{r_1} we obtain
  \begin{align}
  &V_k(U,U,1)
{=}
\!\!\!\min_{\omega\in[0,U]}
\mathbb E\!\!\left[V_{k{+}1}(U_{k{+}1,1},U_{k{+}1,2},0)\!\!\left|\!\!
\begin{array}{l}
\omega_{k,i}{=}\omega,\\
U_{k,i}{=}U,
\\ 
\rho_k{=}1
\end{array}
\right.\!\!\!
\right]
\nonumber \\&
\geq
\min_{(\omega_1,\omega_2)\in[0,U]^2}
\mathbb E\!\!\left[V_{k{+}1}(U_{k{+}1,1},U_{k{+}1,2},0)\!\!\left|\!\!
\begin{array}{l}
\omega_{k,i}{=}\omega_i,\\
U_{k,i}{=}U,
\\ 
\rho_k{=}1
\end{array}
\right.\!\!\!
\right]
\nonumber
\\&
=
V_k(U,U,0),
  \end{align}
  where the inequality follows from the fact that we have extended the optimization interval to
  $(\omega_1,\omega_2)\in[0,U]^2$,
 and therefore $V_k(U,U,1)\geq V_k(U,U,0)$.
We have seen that, for the case $\rho_k=0$, the value function is optimized by the bisection policy.
By inspection, we can see that the lower bound $V_k(U,U,0)$ is also attained by the bisection policy $\omega_{k,1}=\omega_{k,2}=U/2$, which 
satisfies the requirement 
$\omega_{k,1}=\omega_{k,2}$ when $\rho_k=1$. Thus, we have proved the induction step.

By letting $k{=}0$ in \eqref{dfg} with $U_1{=}U_2{=}\sigma$,
 and using \eqref{VL}, we finally obtain \eqref{optvalue} after dividing the energy consumption by the frame duration $T_{\mathrm{fr}}$.
  $\tau_1^*$ is the unique solution of 
 \eqref{optimal_tau}, yielding \eqref{cbv} since $U_{L,i}=\sigma/2^L$ under bisection.

It remains to prove that $g(x_1,x_2)$
is a convex function of $x_i,i\in\{1,2\}$. Due to the symmetry of $g(x_1,x_2)$ with respect to its arguments, it is sufficient to prove convexity with respect to $x_1$ only, with $x_2$ fixed.
We have
\begin{align}
\nonumber
&g(x_1,x_2)=
x_1x_2
V_L\left(\frac{x_1}{2^{L-k-1}},\frac{x_2}{2^{L-k-1}},0\right)
\\&
=
\frac{1}{2^{L-k-1}}
\min_{\tau_1\in(0,T_{\mathrm{cm}})}
x_1^2x_2\epsilon_1\left(\tau_1\right)
+x_1x_2^2\epsilon_2\left(T_{\mathrm{cm}}-\tau_1\right).
\nonumber
\end{align}
Note that the convexity of $g(\cdot)$ is unaffected by $k$, thus we let $k=L-1$.
Let $\tau_1(x_1)$ be the minimizer above, as a function of $x_1$.
We obtain
\begin{align*}
&\frac{\mathrm dg(x_1,x_2)}{\mathrm dx_1}= 
2x_1x_2\epsilon_1\left(\tau_1(x_1)\right)
+x_2^2\epsilon_2\left(T_{\mathrm{cm}}-\tau_1(x_1)\right)
\nonumber \\&+\tau_1^\prime(x_1)x_1x_2
\left[
x_1\epsilon_1^\prime\left(\tau_1(x_1)\right)
-x_2\epsilon_2^\prime\left(T_{\mathrm{cm}}-\tau_1(x_1)\right)
\right],
\end{align*}
where $\tau_1^\prime(x_1)\triangleq \frac{\mathrm d\tau_1(x_1)}{\mathrm dx_1}$.
Note that $\tau_1(x_1)$ must satisfy \eqref{optimal_tau} (with $U_{L,i}=x_i$), yielding
\begin{align*}
\frac{\mathrm dg(x_1,x_2)}{\mathrm dx_1}=
2x_1x_2\epsilon_1\left(\tau_1(x_1)\right)
+x_2^2\epsilon_2\left(T_{\mathrm{cm}}-\tau_1(x_1)\right).
\end{align*}
The second derivative of $g(x_1,x_2)$ with respect to $x_1$ is then given by
\begin{align}
&\frac{\mathrm d^2g(x_1,x_2)}{\mathrm dx_1^2}=
2x_2\epsilon_1\left(\tau_1(x_1)\right)
+x_1x_2\epsilon_1^\prime\left(\tau_1(x_1)\right)\tau_1^\prime(x_1)
\nonumber \\ &+\tau_1^\prime(x_1)x_2
\left[
x_1\epsilon_1^\prime\left(\tau_1(x_1)\right)
-x_2\epsilon_2^\prime\left(T_{\mathrm{cm}}-\tau_1(x_1)\right)\right].
\end{align}
Using again the fact that $\tau_1(x_1)$ must satisfy \eqref{optimal_tau}, we obtain
\begin{align}
\label{2nd_derivative_x1}
\!\!\!\frac{\mathrm d^2g(x_1,x_2)}{\mathrm dx_1^2}=
2x_2\epsilon_1\left(\tau_1(x_1)\right)
{+}\epsilon_1^\prime\left(\tau_1(x_1)\right)x_1
x_2\tau_1^\prime(x_1).
\end{align}

From \eqref{optimal_tau}, we have that $\tau_1(x_1)$ must satisfy
$x_2\epsilon_2^\prime\left(T_{\mathrm{cm}}{-}\tau_1(x_1)\right)
{=}
x_1\epsilon_1^\prime\left(\tau_1(x_1)\right)$. By computing the derivative with respect to $x_1$ on both sides of this equation, we obtain
$\tau_1^\prime(x_1)$ as 
\begin{align}
\tau_1^\prime(x_1)
=
\frac{1}{x_2}
\frac{[\epsilon_1^\prime\left(\tau_1(x_1)\right)]^2
}{
\left[\begin{array}{l}
-\epsilon_1^\prime\left(\tau_1(x_1)\right)
\epsilon_2^{\prime\prime}\left(T_{\mathrm{cm}}-\tau_1(x_1)\right)
\\
-\epsilon_1^{\prime\prime}\left(\tau_1(x_1)\right)
\epsilon_2^\prime\left(T_{\mathrm{cm}}-\tau_1(x_1)\right)
\end{array}\right]
}.
\end{align}
Thus, by substituting in \eqref{2nd_derivative_x1},
 the convexity of $g(x_1,x_2)$ 
 ($\frac{\mathrm d^2g(x_1,x_2)}{\mathrm dx_1^2}>0$) becomes equivalent to
 \begin{align}
 \label{convexity_check}
-2\epsilon_1\epsilon_1^\prime\epsilon_2^{\prime\prime}
-2\epsilon_1\epsilon_1^{\prime\prime}\epsilon_2^\prime
+\epsilon_2^\prime[\epsilon_1^\prime]^2
>
0
\end{align}
where $\epsilon_i$, $\epsilon_i^\prime$, $\epsilon_i^{\prime\prime}$
is shorthand notation for
$\epsilon_i(\tau_i(x_1))$, $\epsilon_i^\prime(\tau_i(x_1))$, $\epsilon_i^{\prime\prime}(\tau_i(x_1))$, with
$\tau_2(x_1)=T_{\mathrm{cm}}-\tau_1(x_1)$, respectively.

Let  $y_1=\frac{T_{\mathrm{fr}}}{\tau_1}R_1$
and
$y_2=\frac{T_{\mathrm{fr}}}{T_{\mathrm{cm}}-\tau_1}R_2$.
We obtain
 \begin{align}
 \left\{
 \begin{array}{l}
\epsilon_i=
\frac{2^{y_i}-1}{y_i}\frac{T_{\mathrm{fr}}R_i}{\gamma_i},
\\
 \epsilon_i^\prime=
 \frac{2^{y_i}-1}{\gamma_i}-\frac{2^{y_i}}{\gamma_i}\ln(2)y_i,
\\
 \epsilon_i^{\prime\prime}=
 \frac{2^{y_i}}{\gamma_i T_{\mathrm{fr}}R_i}[\ln(2)]^2y_i^3.
 \end{array}\right.
 \end{align}
 Substituting in \eqref{convexity_check}, 
convexity becomes equivalent to
\begin{align}
\nonumber
&q(y_1,y_2)
\triangleq
2^{y_2}[\ln(2)y_2-1+2^{-y_2}]2[1-2^{-y_1}][\ln(2)]^2y_1^2\\&
-2^{y_2}[\ln(2)y_2-1+2^{-y_2}][\ln(2)y_1-1+2^{-y_1}]^2\\&
+2\frac{R_1}{R_2}[1-2^{-y_1}]
[\ln(2)y_1-1+2^{-y_1}][\ln(2)]^2\frac{2^{y_2}y_2^3}{y_1}
>0,
\nonumber
\end{align}
which we are now going to prove.
Using the fact that $\ln(2)y_1-1+2^{-y_1}>0$,
we have that 
\begin{align}
\nonumber
&q(y_1,y_2)
\geq
2^{y_2}[\ln(2)y_2-1+2^{-y_2}]2[1-2^{-y_1}][\ln(2)]^2y_1^2\\&
-2^{y_2}[\ln(2)y_2-1+2^{-y_2}][\ln(2)y_1-1+2^{-y_1}]^2
 \\ & \propto
 \nonumber
 2[1-2^{-y_1}][\ln(2)]^2y_1^2
 -[\ln(2)y_1-1+2^{-y_1}]^2
\triangleq \hat q(y_1),
\end{align}
where $\propto$ denotes proportionality up to the multiplicative positive factor $2^{y_2}[\ln(2)y_2-1+2^{-y_2}]>0$.
The derivative of $\hat q(y_1)$ with respect to $y_1$ is given by
\begin{align}
\nonumber
&\frac{\mathrm d\hat q(y_1)}{\mathrm dy_1}
=
2\ln(2)[1-2^{-y_1}]^2
\\&
+2\ln(2)\ln(2)y_1[1-2^{-y_1}+2^{-y_1}\ln(2)y_1]
>0.
\end{align}
Therefore, we obtain
$q(y_1,y_2)\geq 2^{y_2}[\ln(2)y_2-1+2^{-y_2}]\hat q(y_1)>2^{y_2}[\ln(2)y_2-1+2^{-y_2}]\hat q(0)=0$. 
The convexity of $g(x_1,x_2)$ with respect to $x_i$ is proved, hence the theorem.

  \vspace{-4mm}
\bibliographystyle{IEEEtran}
\bibliography{reference}

\end{document}